\numberwithin{equation}{section}
\newtheorem{theorem}{Theorem}[section]
\newtheorem{proposition}[theorem]{Proposition}
\numberwithin{equation}{section}
\newcommand{\cA}{\mathcal{A}}
\newcommand{\Ecal}{\mathcal{E}}
\newcommand{\Fcal}{\mathcal{F}}
\newcommand{\Scal}{\mathcal{S}}
\newcommand{\Tcal}{\mathcal{T}}
\newcommand{\Ucal}{\mathcal{U}}
\newcommand{\Vcal}{\mathcal{V}}
\newcommand{\BV}{\mathcal{BV}}
\newcommand{\Ci}{\mathcal{C}^{\infty}}
\newcommand{\T}{\cdot_{\mathcal{T}}}
\newcommand{\ext}{\mathrm{ext}}
\newcommand{\ph}{\varphi}
\newcommand{\NN}{\mathbb{N}}
\newcommand{\RR}{\mathbb{R}}
\newcommand{\supp}{\mbox{supp}}
\newcommand{\be}{\begin{equation}\nonumber}
\newcommand{\ee}{\end{equation}}
\def\eg{{\it e.g.\ }}
\def\ie{{\it i.e.\ }}
\newcommand{\const}{\mathrm{const}}
\newcommand{\mc}{\mu c}
\newcommand{\WF}{\mathrm{WF}}
\newcommand{\rpar}[2]{\frac{#1\!\!\stackrel{\leftarrow}{\delta}}{\delta{#2}}}
\newcommand{\lpar}[2]{\frac{\stackrel{\rightarrow}{\delta}\!\!#1}{\delta{#2}}}
\newcommand{\Pei}[2]{\lfloor #1, #2 \rfloor}
\begin{document}
\title{Locally covariant approach to effective quantum gravity}
\author[R. Brunetti]{Romeo Brunetti}
\address{Dipartimento di Matematica, Universit\`a di Trento, 38123 Povo (TN), Italy}
\email{romeo.brunetti@unitn.it}

\author[K. Fredenhagen]{Klaus Fredenhagen}
\address{II. Institute f\"ur Theoretische Physik, Universit\"at Hamburg, 22761 Hamburg, Germany}
\email{klaus.fredenhagen@desy.de}

\author[K. Rejzner]{Kasia Rejzner}
\address{Department of Mathematics, University of York, YO10 5DD York, UK}
\email{kasia.rejzner@york.ac.uk}

\thanks{Invited chapter for the Section ``Perturbative Quantum Gravity'' of the ``Handbook of Quantum Gravity'' (Eds. C. Bambi, L. Modesto andÊ
     I.L. Shapiro, Springer Singapore, expected in 2023)}
\begin{abstract} Despite the fact that quantum gravity is non-renormalisable, a consistent and mathematically rigorous construction of a perturbation series is possible.
    This is based on the use of the Batalin-Vilkovisky-Becchi-Rouet-Stora-Tyutin formalism for gauge theories, the methods of perturbative algebraic quantum field theory and the principle of local covariance. The truncation of the series can be interpreted as an effective quantum field theory which provides predictions for observations at sufficiently small energy scales. Quantum cosmology can be seen as its lowest order expansion, and precision measurements on the cosmic microwave background yield the first empirical test of this approach to quantum gravity. 
\end{abstract}
\maketitle
\section{Introduction}
Fundamental physics at small scales is successfully described by the Standard Model of particle physics \cite{goldberg}. This is a  quantum field theory (QFT) model,
 where the basic objects are quantum fields defined as operator-valued distributions on Minkowski space \cite{SW00}. Effects of external gravitational fields can be included by generalizing the model to Lorentzian spacetimes; provided these spacetimes are globally hyperbolic, a consistent framework has been developed. This is Locally Covariant Quantum Field Theory \cite{BFV03,HW01}. The short distance problems caused by the curvature of the underlying spacetime are nowadays well understood, the interpretation of states, however, is not yet so clear. The reason is that the interpretation of states of QFT on Minkowski space is mainly done in terms  of particles, but there is no known good concept of particles on generic spacetimes, a fact which is visible in some approaches 
 as ``particle creation by curvature'' \cite{parker}. As a consequence, an experimental confirmation of the generalization to QFT on curved backgrounds is not yet available. Nevertheless, there are some predictions of this framework, in particular the radiation of black holes discovered by Hawking \cite{hawking}. The latter prediction seems to be rather convincing \cite{FH90}, but a direct observation for presently known black holes is not possible, since the derived temperature of this radiation is too small. The expected existence of this radiation, however, can be considered as a hint for the way, gravity is connected to quantum physics. Actually, it gave already rise to a rich and somewhat speculative literature (see, \eg \cite{Haw05} and related papers).  
 
 In general, at typical scales of present physics, the quantum effects of the interaction of gravity with other fields seem to be rather small such that their neglection does not lead to conflicts with observation. 
Since gravity is very well described by general relativity which is a classical field theory of the spacetime metric, the direct way of unifying gravity with quantum physics is to add the spacetime metric as an additional field to the fields of the Standard Model. As a classical field theory this is well defined \cite{FR12}. As a quantum field theory, however, severe problems occur.
 
 The main approach to the Standard Model is via perturbation theory. So one first builds a theory of non-interacting fields, where the dynamics is governed by a Lagrangian which is a quadratic functional of the basic fields. The higher order terms of the Lagrangian are then treated as perturbations, and after several decades of hard work, the expansion could be constructed in terms of a formal power series where the lower orders could be explicitly calculated and yield a good, and often excellent agreement with experiments. Crucial for this expansion is that there is only a finite number of parameters (around 25) which have to be determined by experiments so that predictions are possible, a fact which is due to the renormalisability of the model. 
 
 If one tries a corresponding approach to gravity, one has to choose the non-interacting theory, for instance the metric of Minkowski space with a theory of massless particles with helicity 2 (``gravitons''), and uses the difference to the full Einstein Hilbert Lagrangian as an interaction. This interaction is, however, not a polynomial in the gravitational field and its derivatives. Since in 4 spacetime dimensions non-polynomial functionals of the free fields are not well defined (all relatively local fields are Wick polynomials, \cite{borchers,epstein}), one replaces the interaction by a formal power series of polynomials. But then, it turns out that in the construction of the perturbative expansion, in every order new free parameters occur, so that strictly speaking no predictions are possible \cite{tHV,GS86}.
 
 But the latter statement is too pessimistic. Namely, these free parameters have mass dimensions which are increasing with the order of the perturbative expansion. Hence at sufficiently small scales their influence is negligible. The fact that quantum effects of gravity are not easily visible indicates that our present observations  take place at such scales.      

The aim is therefore to ignore the problem of non-renormalisability (in the sense of the occurrence of a finite, nonzero number of free parameters in every order) and to develop a consistent theory of quantum gravity as a formal power series. After calculating the theory up to a given order, one can then  determine the free parameters at this order and check whether the predictions agree with observations \cite{Don22}. 

There are, however, further complications due to the necessity to satisfy the condition of general covariance. So the formalism should start from an arbitrary globally hyperbolic metric, but different choices have to lead to the same theory. Moreover, one has to define local observables, in spite of the fact that diffeomorphisms might change the localization region. Hence one has to build diffeomorphism invariant combinations of the basic fields. In classical field theory this can be achieved by using some of the fields as coordinates and considering the remaining fields as functions of these coordinates (\textit{relational observables}  \cite{BFHPR16}, for a review see \eg \cite{Tamb12}). It is by no means obvious how this can be done with quantum fields as operator valued distributions. Moreover, the causal relations of the underlying spacetime are not stable against variations of the metric, but typically manifest themselves as algebraic relations between quantum fields.

We will describe how a consistent framework for effective quantum gravity coupled to a scalar field can be formulated, and we show that in lowest nontrivial order it just coincides with the models used in quantum cosmology (see, \eg \cite{DS20}) to explain the fluctuations of the cosmic microwave background. This might be seen as a first visible quantum effect of gravity.

\section{Relational observables}\label{sec:inv}
We consider a model of several scalar fields as a simplified version of the Standard Model coupled to the spacetime metric. Then we use scalar fields $X_i$ which may be functions of the elementary fields and such that for generic values of the elementary fields the fields $X_i$ form a coordinate system. In the quantized theory we expand the theory around such a background.

Given a $4$-dimensional manifold  $M$ diffeomorphic to $\RR^4$, we consider configurations $\Gamma=(g,\phi)$ where $g$ is a Lorentz metric such that $(M,g)$ is globally hyperbolic, and $\phi$ is a smooth function with values in $\RR^n$. The configurations are smooth sections of an affine bundle over $M$. Let $\Ecal(M)$ denote the space of configurations $\Gamma$. This is an infinite dimensional affine manifold.

Diffeomorphisms $\chi$ of $M$ act on these sections in a natural way, and this action extends uniquely to an action on the associated jet bundle and induces an action of the space $\Fcal(M)$ of smooth functionals on $\Ecal(M)$. We then consider functions $X_i,i=1,\dots,4$ on the jet bundle which transform as scalar fields, \ie $\chi^{\ast}X_i(x)=X_i(\chi(x)),\ x\in M$. Examples of such functions are the scalar fields $\phi_i$ themselves and also suitable functionals of the metric, for instance the traces of powers of $R_\mu^\nu$, the Ricci tensor multiplied with the inverse metric. We now choose a configuration $\Gamma_0$ such that $x\mapsto X[\Gamma](x)$
is a diffeomorphism from $M$ to $\RR^4$ for $\Gamma$ sufficiently near to $\Gamma_0$. Let $\alpha[\Gamma ]$ denote the diffeomorphism of $M$ such that 
\begin{equation}
    X[\Gamma]\circ\alpha[\Gamma]=X[\Gamma_0]\ .
\end{equation}
Given any tensor field $A[\Gamma]$, we define \cite{BFHPR16}
\be
\mathcal{A}[\Gamma]=\alpha[\Gamma]^{\ast}A[\Gamma]
\ee
where $\alpha[\Gamma]^{\ast}$ is the pullback associated to $\alpha[\Gamma]$,
and observe that $\mathcal{A}$ is invariant under diffeomorphisms, $\mathcal{A}[\chi^{\ast}\Gamma]=\mathcal{A}[\Gamma]$. Integrating $\mathcal{A}[\Gamma]$ with any test function, we obtain a smooth functional, i.e. an element of $\Fcal(M)$.

We now expand $\mathcal{A}[\Gamma]$ around $\Gamma_0$ and obtain a formal power series of fields on the tangent space of the configuration space at the background $\Gamma_0$. Up to first order this is
\begin{equation}\label{eq:inv1}
\mathcal{A}[\Gamma]=A[\Gamma_0]+\left\langle \frac{\delta A}{\delta \Gamma}[\Gamma_0],\Gamma-\Gamma_0\right\rangle -
\mathfrak{L}_ZA[\Gamma_0]
\end{equation}
with the Lie derivative $\mathfrak{L}_Z$ for the vector field $Z=\langle\frac{\delta}{\delta\Gamma}\alpha[\Gamma_0],\Gamma-\Gamma_0\rangle$. We see in particular, that fields which vanish on the background configuration are at first order invariant under diffeomorphisms, a fact which is known as the Stewart-Walker theorem \cite{SW74}.
\section{BV-BRST formalism for gravity}\label{sec:BV}
\subsection{Main ideas of the formulation}
As customary for gauge theories we describe the action of infinitesimal diffeomorphisms by a fermionic vector field $c$ (the ghost field) and  the Becchi-Rouet-Stora-Tyutin (BRST) operator $\gamma$ as the exterior derivative with respect to the action of diffeomorphisms on field configurations. More concretely, the action of $\gamma$ on the metric and on the matter fields $\phi$ is given in terms of the Lie derivative, so 
\[
\gamma g=\mathcal{L}_c g\,,\qquad \gamma \phi=\mathcal{L}_c \phi
\]
where the ghost is treated as the evaluation functional on the space $\mathfrak{X}(M)$ of vector fields on $M$. For a functional $F\in \Fcal(M)$, we have
\[
\gamma F= \left<\frac{\delta F}{\delta g},\mathcal{L}_c g\right>+\left<\frac{\delta F}{\delta \phi},\mathcal{L}_c \phi\right>\,,
\]
and for the ghost itself, 
\[
\gamma c=-\frac{1}{2}[c,c,]\,,
\]
which is now an antisymmetric bilinear functional (bilinear form) on $\mathfrak{X}(M)$.
We require $\gamma$ to satisfy the graded Leibniz rule, which makes it a differential on the space of $n$-forms on  $\mathfrak{X}(M)$, valued in functionals of the metric.  It is convenient to think about antilinear forms on $\mathfrak{X}(M)$ as functionals of fermionic variables $c^{\mu}$. Together with the field configurations $\Gamma$, the ghosts form a graded affine manifold that we denote $\overline{\Ecal}(M)$ (extended configuration space). An element of $\overline{\Ecal}(M)$ is a field multiplet $\ph$, where the components $\varphi^i$ run through all elementary fields of the model, \ie the scalar fields $\phi_j$, the components $g_{\mu\nu}$ of the spacetime metric and the ghosts $c^{\mu}$. We will keep denoting the space of functionals on $\overline{\Ecal}(M)$ by $\Fcal(M)$.

In order to include also the field equations, we extend the space of functionals to the space of \textit{vector fields} $\Vcal(M)$ on the extended configuration space
which can formally be written as 
\[
\int X^i(x) \frac{\delta}{\delta \ph^i(x)}\ .
\]
The functional derivatives $\ph_i^{\ddagger}\equiv\frac{\delta}{\delta \ph^i}$
are called \emph{antifields}
and are treated as densities. Since we are dealing with graded quantities here, we need to specify whether we differentiate from the right or from the left. Unless stated otherwise, all the derivatives are left derivatives.

It turns out to be convenient to embed field configurations, and antifields in a graded manifold, where antifields have the opposite parity as the associated fields. The space of functions on that space is a graded commutative algebra. In this algebra one introduces an odd Poisson bracket $\{.,.\}$, the Schouten bracket, also called antibracket. For a vector field $X\in\Vcal(M)$ and a functional $F\in \Fcal(M)$ the bracket is just the application of the vector field to the  functional,
\be
\{X,F\}=XF\ ,
\ee
and for two vector fields $X,Y\in\Vcal(M)$ it coincides with the Lie bracket. For more general entries, it is extended by the graded Leibniz rule.
This way we obtain a graded Poisson algebra $\mathcal{BV}(M)$, the Batalin-Vilkovisky (BV) algebra. See \cite{BFR16} for more details.

The action $S$ (for now we can think of it formally as a functional on $\Ecal(M)$, the precise formulation follows in the next section) is the sum  of the Einstein-Hilbert action and the action of the scalar fields which we assume to have the form
\be
\int\frac12\sum_j d\phi_j \wedge \ast d\phi_j- V(\phi)\ ,
\ee
with the Hodge dual $\ast$ of the metric and an interaction density $V$. 
The field equations are obtained by the Schouten bracket  of the action with the antifields and give rise to the Koszul-Tate differential
\be
\delta(\bullet)=\{S,\bullet\}.
\ee              
The action is invariant under diffeomorphisms  and hence $\gamma S=0$. Moreover, since $\gamma$ is a graded derivation with respect to the Schouten bracket we have
\be
(\delta\gamma+\gamma\delta)(\bullet)=\{S,\gamma(\bullet)\}+\gamma\{S,\bullet\}=\{\gamma(S),\bullet\}=0.
\ee
Therefore, the \emph{classical BV operator} 
\[
s=\gamma+\delta\,,
\] 
satisfies $s^2=0$. The cohomology of $s$ then yields the gauge invariant on shell observables of the classical theory. 

We now expand $\gamma$ in terms of antifields and obtain
\be
\gamma(\bullet)=\left\{\int\sum_i \gamma(\varphi^i)\varphi_i^{\ddag},\bullet\right\}
\ee       
 We can then write  $s$ as a Schouten bracket with the extended action
\be
S_{\mathrm{ext}}=S+\int\sum_i \gamma(\varphi^i)\varphi_i^{\ddag}\ ,
\ee         
and the extended action satisfies the equation
\begin{equation}\label{eq:CME}
\{S_{\mathrm{ext}},S_{\mathrm{ext}}\}=0\,,
\end{equation}
called \textit{the classical master equation}. We can express the classical BV operator as:
\[
s=\{S_{\ext},\bullet\}\,.
\]

 In order to implement gauge fixing conditions we extend $\overline{\Ecal}(M)$ with additional scalar fields $b_j$, $\bar{c}_j$, $j=0,\dots, 3$. Consequently, $\BV(M)$ gets extended with the associated antifields and one needs to add extra terms to the action $S_{\mathrm{ext}}$. The \emph{Nakanishi-Lautrup fields} $b_j$ are bosonic and transform as scalar fields under $\gamma$,
 \be
 \gamma(b_j)=c^{\nu}\partial_{\nu}b_j\ .
 \ee
 The fields $\bar{c}_j$ (the antighosts) are fermionic and transform as
 \begin{equation}
 \gamma(\bar{c}_j)=-ib_j+c^{\nu}\partial_{\nu}\bar{c}_j \ .
 \end{equation}
 Moreover, $\delta (\bar{c}_j)=\delta (b_j)=0$.
In this way the cohomology of $s$ is not changed. The new extended action is still denoted by $S_{\ext}$.
  
 As gauge fixing fermion we use
 \be
 \Psi=\int\sum_j i\bar{c}_j\left(d\ast dx_j-\tfrac{1}{2}b_jd\mathrm{vol}\right)\,,
 \ee
 with the canonical coordinates $x_j$ of $\RR^4$ and the density $d\mathrm{vol}$ induced by the metric.
 The new extended action is obtained by applying the canonical transformation of the graded Poisson algebra $\mathcal{BV}(M)$, which is induced by $\Psi$, i.e.:
 \[
\ph^i\mapsto \ph^i\,\quad, \ph_i^\ddagger\mapsto \ph_i^\ddagger+\frac{\delta \Psi}{\delta\ph^i}\,.
 \]
 It is given in terms of the old one by
 \be
 S_{\ext}\mapsto S_{\ext}+s(\Psi)=S_{\ext}+\int\sum_j b_j(d\ast d x_j-\frac12 b_jd\mathrm{vol})-i\bar{c}_j d\ast d c^{j} 
 \ee
 up to higher order terms in the antifields.
 
 The theory with the action in this form can be quantized, as the term with no antifields induces normally hyperbolic equations of motion. Different gauge fixings correspond to different canonical transformation of the original graded Poisson algebras and formally (i.e. under infinitesimal changes of the gauge fixing fermion) they are equivalent. In the final step, one sets antifields to zero and the resulting action is called \emph{the gauge-fixed action}.
 \subsection{Precise formulation}\label{sec:precise}
 One of the difficulties one faces when making all the ideas presented above precise is that non-trivial solutions to normally hyperbolic field equations on globally hyperbolic spacetimes cannot be compactly supported. This means that in order to have non-trivial on-shell theory, one cannot restrict oneself to compactly supported configurations. At the same time, globally hyperbolic spacetimes are necessarily non-compact, so the action $S$ cannot be defined simply as the corresponding Lagrangian density $\mathcal{L}$ integrated over the whole $M$, and integration over compact subregions yields singular functionals. Instead, we use the approach of \cite{BDF,BFR}, where a \textit{generalized Lagrangian} is a map $L$ from the space of test functions $\Ci_c(M)$ to the space of local functionals. This map has to be local and in mathematical terms this is expressed by the requirement that it is a natural transformation between certain functors. More concretely, given the usual Lagrangian density $\mathcal{L}(x)[\ph]$ as a map on the jet bundle associated to the extended configuration space, one can define
 \[
L(f)\doteq \int \mathcal{L}[f\ph]\,.
 \]
 Actions are equivalence classes of Lagrangians under the equivalence relation
 \[
L_1\sim L_2\ \mathrm{iff}\ \supp((L_1-L_2)(f))\subset \supp(df)\,,
 \]
 where the \emph{spacetime} support of a functional $F\in \Fcal(M)$ is defined by
 \begin{multline*}
\supp F\doteq \{x\in M|\forall\ \Ucal\ni x \textrm{ open neighborhood}\ \exists \ph,\psi\in \overline{\Ecal}\ \textrm{with}\ \supp(\ph-\psi)\subset\Ucal\\
\textrm{s.t.} \ F(\psi)\neq F(\ph)\}
 \end{multline*}
 The classical master equation \eqref{eq:CME} is weakened to
 \[
\{S_{\mathrm{ext}},S_{\mathrm{ext}}\}\sim 0\,,
 \]
 where $S_{\mathrm{ext}}$ is the equivalence class corresponding to the generalized Lagrangian $L_{\mathrm{ext}}$. We express the classical BV operator as
\[
sF=\{L_{\ext}(f),F\}\,,
\]
where $f\equiv 1$ on supp $F$. We denote this operation by $\{S_{\ext},\bullet\}$.
 
 Classical dynamics are implemented by means of the \emph{Euler-Lagrange derivative}, which is a 1-form on $\overline{\Ecal}(M)$ defined by:
 \[
\left<dL(\ph),\psi\right>\doteq \left< \frac{\delta L(f)}{\delta \ph}(\ph),\psi\right>\,,
 \]
 where $f\equiv 1$ on support $\psi$. Here $\psi$ is compactly supported ($\psi\in\overline{\Ecal}_c(M)$), so $\frac{\delta L(f)}{\delta \ph}(\ph)\in \overline{\Ecal}'_c(M)$ is a distributional density without a restriction on support. For graded field configurations, $\frac{\delta}{\delta \ph^i}$ is the left derivative. Note that $dL$ depends only on $S[L]$, the equivalence class of $L$.

 For the purpose of quantisation, we will need a stronger version of this condition. Implementing this in practice for quantum gravity requires us to use not one, but two different test functions $f_1$, $f_2$, where $f_1$ is used for the Einstein-Hilbert and scalar fields Lagrangians and $f_2$ is used to multiply the gauge field $c$. This yields that the gauge transformations $\gamma$ are compactly supported, and moreover we require that $f_1\equiv 1$ on the support of $f_2$ since in this way the gauge transformations do not see the cut-off of the matter-metric part of the Lagrangian. Eventually, we define
 \[
L_{\textrm{ext}}(f_1,f_2)[\ph,\ph^\ddagger]\doteq\int \mathcal{L}(f_1g,f_1\phi, f_1 \overline{c}, f_1 b, f_2 c;\ph^\ddagger) 
 \]
and the antifields are also transformed, so that $\ph^\ddagger_i\equiv \frac{\delta}{\delta (f_1\ph^i)}$ for all $i$ apart from the ghost indices for which we have ${c^\mu}^\ddagger\doteq \frac{\delta}{\delta (f_2c^\mu)}$. With this definition, the gauge-invariance of the original action implies that 
\[
\{L_{\textrm{ext}}(f_1,f_2),L_{\textrm{ext}}(f_1,f_2)\}=0\,.
\]
\section{Perturbation  around a background}
We choose a background configuration $\Gamma_0=(g_0,\phi_0)$ and $c=b=\bar{c}=0$. 
The background configuration is chosen such that the dynamical coordinates $X_j[\Gamma_0]$ form a coordinate system. The background values of antifields are also set to zero. As a simple example we consider $4$ scalar fields  with $V=0$ and use as background
$\phi_{0_{j}}=x_j$ and a metric solving the Einstein equations with the energy momentum tensor $T_{\mu\nu}(\phi_0)$ such that also the field equation for $\phi_0$ is satisfied. Then we can choose $X_j=\phi_j$ as
dynamical coordinates. 

We expand the generalized Lagrangian $L_{\ext}(g_0+\kappa h,\phi_0+\kappa\varphi,\kappa c,\kappa b,\kappa \bar{c}, \kappa \ph^{\ddagger})$
in $\kappa$ and obtain the decomposition 
\be
L_{\ext}=\kappa^2 L_{0}+L_I(\kappa)+\mathrm{const}
\ee
where $L_0$ contains the terms of second order and $L_I$ starts with terms of at least third order. Both $L_0$ and $L_I$ can be expanded with respect to the antifield number, so that
\[
L_0=L_{00}+L_{01}\,,\qquad L_I=L_{I0}+L_{I1}
\]

For the chosen gauge fixing fermion, the Euler Lagrange equation for $L_{00}$ is Green hyperbolic \cite{Bar},
so the free theory can be constructed by means of deformation quantization and we construct the full theory as a formal power series in $\kappa$ using perturbative algebraic quantum field theory (pAQFT) methods.

The Feynman propagator obtains a factor  $\kappa^{-2}$. The time-ordered powers of the interaction Lagrangian $L_I$ are then formal power series in
$\kappa$.
The algebra of observables is now defined as the cohomology of the \emph{quantum BV operator}, which is a deformation of the classical BV operator $s$.
This contains the diffeomorphism invariant formal power series obtained by expanding fields as functions of the scalar fields $\phi_j$. 

Let us make these ideas more precise. We write the field equation for the generalized Lagrangian $L_{00}$ in the form:
\[
dL_{00}(\ph)=P\ph=0\,,
\]
where $P$ is a Green hyperbolic operator. In terms of the components of the field multiplet $\ph$, we have
\begin{equation}\label{eq:P}
P_{ij}(x)=\lpar{}{\ph^i(x)}L_{00}(f)\rpar{}{\ph^j(x)}\,,
\end{equation}
with left and right derivatives, where $f\equiv 1 $ on a compact neighborhood of $x$. Similarly
\begin{equation}\label{eq:K}
K^{i}_{\ j}(x)=\lpar{}{\ph^\ddagger_j(x)} L_{01}(f) \rpar{}{\ph^i(x)} \ .
\end{equation}

We know that on globally hyperbolic spacetimes there exist retarded and advanced Green functions for $P$. We denote them by $\Delta^{\rm R/A}$. The Poisson bracket of the free theory is introduced using the Pauli-Jordan function
\[
\Delta\doteq \Delta^{\rm R}-\Delta^{\rm A}.
\]
and is defined by
\[
\Pei{F}{G}\doteq \left<\lpar{F}{\ph},\Delta\, \rpar{G}{\ph}\right> \,,
\]
where we suppressed all the indices. Here $F$ and $G$ are smooth functionals with smooth derivatives (i.e. $\frac{\delta F}{\delta \ph}(\ph)$, $\frac{\delta G}{\delta \ph}(\ph)$ are smooth for all $\ph\in\overline{\Ecal}(M)$)

Note that this expression does not involve derivatives with respect to the antifields. To see that the bracket is also well-defined on the $0$-th cohomology of $s_0=\{S_{0},\bullet\}$ (the space of gauge-invariant on-shell observables of the linearised theory), we need to check whether $s_0$ is a graded derivation with respect to the bracket $\Pei{.}{.}$. We expand $s_0$ in the antifield number as $s_0=\delta_0+\gamma_0$, where $\delta_0=\{S_{00},\bullet\}$ implements the linearised equations of motion and $\gamma_0=\{S_{01},\bullet\}$ is the linearised BRST operator.

For $\delta_0$, the argument is clear since $\Delta$ is itself a bi-solution for $P$ and $\delta_0 \ph^{\ddagger}=P\ph$. For $\gamma_0$, we use the fact that the linearized field equations are invariant under the linearized BV operator $\gamma_0$, so
\[
K \Delta^{\mathrm R/A}+\Delta^{\mathrm R/A} K^\dagger=0\,,
\]
where $K^\dagger$ is the formal adjoint of $K$ with respect to the pairing $\left<.,.\right>$. The same goes for $\Delta$. Consequently, $\gamma_0$ is a derivation with respect to $\Pei{.}{.}$.

To deform this resulting Poisson algebra, we need to pick a state for the free (i.e. linearized) theory. We choose a quasifree Hadamard state with a 2-point function $\Delta^+$. Being a Hadamard state means that $\Delta^+$ is of positive type (i.e. $\Delta^+(\bar{f},f)\geq 0$, where $f$ is a test function and $\bar{f}$ is its complex conjugate), the imaginary part of $\Delta^+$ is given by $\frac{i}{2}\Delta$, $\Delta^+$ is a bisolution of $P$ and, crucially,  $\Delta^+$ satisfies the following wavefront set condition:
\[
\mathrm{WF}(\Delta^+)=\{(x,k;x',-k')\in\dot{T}^*M^2|(x,k)\sim(x',k'), k\in (\overline{V}_+)_x\}\,,
\]
where $(x,k)\sim(x',k')$ means that there exists a lightlike geodesic connecting the spacetime points $x$ and $x'$ with $k$ and $k'$ their respective cotangent vectors where the last one is the parallel transport of the former, moreover $\overline{V}_\pm$ is (the dual of) the closed future/past lightcone, seen as a subset of the cotangent bundle and $\dot{T}^*M^2$ is the cotangent bundle deprived of its zero section. The wavefront set characterises the singularity structure of $\Delta^+$ and the above condition essentially says that we want to select states with the same singularity structure  as the Minkowski vacuum. We write
\[
\Delta^+=\frac{i}{2}\Delta+H\,,
\]
where $H$ is the symmetric part, dependent on the choice of the Hadamard state. We define the star product corresponding to this choice as
\[
(F\star G)(\ph)=e^{\hbar\left<\lpar{}{\ph_1},\Delta^+\, \rpar{}{\ph_2}\right>}F(\ph_1)G(\ph_2)\big|_{\ph_1=\ph_2=\ph}\,.
\]
Here $F$ and $G$ are smooth and have derivatives that satisfy wavefront set conditions that make them compatible with the wavefront set of $\Delta^+$. Basically, one can multiply two distributions, if their wavefront sets do not add up to a set which includes a zero section of the cotangent bundle. The precise condition is that
 \begin{equation}\label{mlsc}
\WF(F^{(n)}(\ph,\ph^\ddagger))\subset \Xi_n,\quad\forall n\in\NN,\ \forall\ph\in\overline{\Ecal}(M)\,,
\end{equation}
where $\Xi_n$ is an open cone defined as 
\begin{equation}\label{cone}
\Xi_n\doteq T^*M^n\setminus\{(x_1,\dots,x_n;k_1,\dots,k_n)| (k_1,\dots,k_n)\in (\overline{V}_+^n \cup \overline{V}_-^n)_{(x_1,\dots,x_n)}\}\,.
\end{equation}
The same for $G$. Functionals satisfying this condition are called \textit{microcausal} and we use the notation $\BV_{\mc}(M)$ for this space of functionals. Equipped with the star product $\star$, this is the extended algebra associated to the free theory. For the star product to be compatible with $s_0$, we also need to require that 
\[
K \Delta^{+}+\Delta^{+} K^\dagger=0\,,
\]
i.e. that our Hadamard state is gauge invariant. With this extra requirement, $\star$ is well-defined on the cohomology of $s_0$, which is now the space of gauge-invariant on-shell observables of free theory of the free quantum theory.
Introducing the interaction is done using the methods of perturbative AQFT, as described in \cite{BDF,RejBook,DueBook}. The interaction Lagrangian is $L_I$ and after inserting a test function (or a pair of test functions, as described in section \ref{sec:precise}), we obtain a compactly supported functional $V\equiv L_I(f_1,f_2)$. 

Under these conditions, we can define interacting fields using the Bogoliubov formula. First we need the time-ordered products. Naively, time-ordered products would be defined in terms of the Feynman propagator,
\[
\Delta^F=\frac{i}{2}(\Delta^{\rm R}+\Delta^{\rm A})+H\,,
\]
as
\[
(F\T G)(\ph)=e^{\hbar\left<\lpar{}{\ph_1},\Delta^F\, \rpar{}{\ph_2}\right>}F(\ph_1)G(\ph_2)\big|_{\ph_1=\ph_2=\ph}\,.
\]
This, however, makes sense only provided that $F$ and $G$ are \emph{regular} functionals, i.e. all their derivatives are smooth. Since interesting observables are more singular than that, we need to employ (the extension of the) \emph{Epstein-Glaser renormalisation} \cite{EG,BF00,HW01} to extend the time-ordered product to more general arguments.

One starts with defining the $n$-fold time-ordered products $\Tcal_n(F_1,\dots F_n)$, where the functionals $F_i$, $i=1,\dots n$ are formal power series with coefficients in local functionals. This includes the relational observables  $\cA[\Gamma]$ from section \ref{sec:inv}, which can be expanded in a formal power series of local fields (again denoted by $\cA[\Gamma]$), and smeared with a test density yield a sequence of local functionals, provided that the coordinates depend locally on the fields, 
\[
\cA[\Gamma](f)\doteq\int \cA[\Gamma](x)f(x)\, .
\]

Epstein-Glaser renormalisation allows one to construct each  $\Tcal_n$, assuming that all the lower order $k$-fold time-ordered products ($k<n$) have been constructed and that they fulfill certain natural axioms. The most important axiom, which makes the procedure work, is the \emph{causal factorisation property}
\[
\Tcal_n(F_1\otimes\dots\otimes F_n)=\Tcal_{k}(F_1\otimes\dots\otimes F_k)\star \Tcal_{n-k}(F_{k+1}\otimes\dots\otimes F_n)\,,
\]
 if the spacetime supports of $F_1,\dots,F_k$ do not intersect the past of
the spacetime supports of $F_{k+1},\dots,F_n$ (w.r.t. the background metric).

The Epstein-Glaser (EG) renormalisation is a well-defined procedure if we work with formal power series in $\hbar$ and $\kappa$, even though the theory is power-counting non-renormalisable. However, the extensions obtained using the EG scheme are not unique and this non-uniqueness is described using the St{\"u}ckelberg-Petermann renormalisation group \cite{BDF,PS,SP}. Roughly speaking, renormalisation group transformations  amount to adding finite counterterms, i.e modifying the couplings. The difference between renormalisable and non-renormalisable theories is that in the former case, the total number of such parameters is finite and does not increase as we go to the higher orders in perturbation theory. This is not the case for gravity, which is power-counting non-renormalisable. However, we can still treat it as an effective theory, if we truncate the series at a finite order in $\hbar$ and $\kappa$.

With the $n$-fold time-ordered products at hand, we can define $\T$ as a binary operation using the operator
\[
\Tcal\doteq \bigoplus_{n=0}^\infty \Tcal_n\circ m^{-1}\,,
\]
where $m^{-1}$ is the operation opposite to multiplication that allows one to factorise multilocal functionals into local ones \cite{FR13}. This works also for products of relational observables constructed out of local fields, since they are expressed as power series in local functionals. Using $\Tcal$, we deform the pointwise product $\cdot$ into the renormalised time-ordered product:
\[
F\T G\doteq \Tcal(\Tcal^{-1}F\cdot \Tcal^{-1} G)\,.
\]
One also deforms the classical BV operator of the free theory into
\[
\hat{s}_0\doteq \Tcal^{-1}\circ s_0\circ \Tcal\,,
\]
the quantum BV operator of the free theory. Applying $\Tcal$ can be thought of as normal ordering, so given a classical observable $F$, $\Tcal F$ is a corresponding free quantum observable.  

Interaction is introduced by means of the quantum M{\o}ller maps. Let $F$ be an observable of classical theory and $\Tcal F$ the corresponding observable of the free quantum theory. We define the S-matrix associated with $V$ by
\[
\Scal(V)\doteq e_{\Tcal}^{i\Tcal V/\hbar}=\Tcal e^{iV/\hbar}\,.
\]
Using the Bogoliubov formula, we can now write down the quantum M{\o}ller operator $R_V$ that allows us to construct interacting quatum observable corresponding to $F$:
\[
R_V(F)\doteq \Scal(\Tcal V)^{-1}\star (\Scal(\Tcal V)\T \Tcal F)=(\Tcal e^{i V/\hbar})^{-1}\star \Tcal(e^{iV/\hbar}\cdot F)\,.
\]
As in the free case, we can deform the classical linearised BV operator to obtain the interacting quantum BV operator:
\[
\hat{s}\doteq R_V^{-1}\circ s_0 \circ R_V\,.
\]
This operator is a local operator, if we assume the quantum master equation (QME), which can be expressed as the condition:
\[
s_0\Scal(V)=0\,.
\]
This is equivalent to
\[
\frac{1}{2}\{L_{\textrm{ext}}(f_1,f_2),L_{\textrm{ext}}(f_1,f_2)\}-i\hbar \triangle_V=0\,,
\]
where $ \triangle_V$ is the anomaly term that can be calculated using the anomalous Master Ward identity \cite{BreDue,FR13}. As explained in \cite{BFR16} we can use the renormalisation freedom in defining the time-ordered products to ensure that the QME is fulfilled. Then the interacting quantum BV operator takes the form:
\[
\hat{s}F=sF-i\hbar \triangle_VF\,,
\]
where $\triangle_VF\doteq \frac{d}{d\mu} \triangle_{V+\mu F}\big|_{\mu=0}$ is the renormalised \emph{BV Laplacian}. The cohomology of $\hat{s}$ describes the quantum gauge-invariant on-shell observables.

It remains to discuss the dependence of the chosen background and gauge fixing. The background independence can be shown by generalizing the formalism to backgrounds which are not solutions of the field equations. One then can show that a change of the background within a compact region induces a trivial automorphism of the algebra of gauge invariant on shell observables \cite{BFR16}. The independence of the choice of gauge fixing is intrinsic to the BV formalism, since a change of the gauge fixing fermion amounts to a symplectic transformation of the BV algebra which does not change the cohomology of the BV operator. Nevertheless, it would be necessary to show that this property is not lost by renormalization.  (See the discussion in \cite{TehraniZahn}.) 

\section{Cosmological perturbation theory}
We apply the general formalism to the case of gravity minimally coupled to a real scalar field (the dilaton) with self interaction $V$ \cite{BFHPR16}. As a background we choose a Friedmann-Lemaitre-Robertson-Walker (FLRW) spacetime with metric $g=a^2\eta$
where $a$ is a function of the conformal time $\tau\equiv x^0$, 
\be
\eta\equiv \eta_{\mu\nu}d x^{\mu}d x^{\nu}=-d\tau^2+\sum dx_i^2\ ,
\ee
and as the background for the dilaton $\phi$, we choose a function $\phi_0$ depending only on $\tau$. The equations of motions are satisfied, if $a$ and $\phi_0$ fulfill the equations
\begin{equation}
    \begin{split}
(\phi_0')^2+2a^2V(\phi_0)&=6\mathcal{H}^2\\
-(\phi_0')^2+2a^2V(\phi_0)&=2(2\mathcal{H}'+\mathcal{H}^2)\\
        \phi_0''+2\mathcal{H}\phi_0'+a^2\frac{dV}{d\phi}(\phi_0)&=0
    \end{split}
\end{equation}
with $\mathcal{H}=a'/a$ and where $\bullet'$ denotes the derivative with respect to $\tau$.
$\mathcal{H}$ is related to the Hubble parameter $H$ by $\mathcal{H}=Ha$ and to the Ricci scalar $R$ by $R=6(\mathcal{H}'+\mathcal{H}^2)a^{-2}$. Note that the third equation follows from the first two if $\phi_0'$ nowhere vanishes. We assume from now on that $\phi_0'<0$.

A problem with this background is that due to its high symmetry there are not sufficiently many local functionals of the fields which can serve as coordinates. For one coordinate function we use the dilaton field $\phi$ itself and set 
\begin{equation}\label{eq:dilatontime}
X_0=\phi_0^{-1}\circ\phi\equiv T\ 
\end{equation}
as our dynamical time function. The other coordinates are constructed as follows.
We restrict the spacetime metric to the  hypersurfaces $\phi=\const$. For $\phi$ near to $\phi_0$ they are spacelike. 
We use the spatial coordinates $x_i$ for the hypersurface given by $\phi=\mathrm{const}$ and we will also need the following definition of the tangent fields at the hypersurface,
\begin{equation}\label{tangent}
    \partial_{i,\phi}=\partial_i-\frac{\partial_i\phi}{\partial_\tau \phi}\partial_\tau\ .
\end{equation}
The components of the induced metric with respect to these coordinates are
\begin{equation}
    g_{\phi, ij}\doteq g_{\phi}(\partial_{i,\phi},\partial_{j,\phi})=g_{ij}-g_{i0}\frac{\partial_j\phi}{\partial_\tau\phi}-g_{j0}\frac{\partial_i\phi}{\partial_\tau\phi}+g_{00}\frac{\partial_i\phi}{\partial_\tau\phi}\frac{\partial_j\phi}{\partial_\tau\phi}\ .
\end{equation}
Let $\triangle_{\phi}$ denote the Laplacian with respect to the induced metric on this hypersurface (not to confuse with the graded BV Laplacian from the previous section). Explicitely, it is given by:
\begin{equation}\label{eq:lap}
\triangle_{\phi}=\frac{1}{\sqrt{\det g_\phi}}\partial_{i,\phi}g_{\phi}^{ij}\sqrt{\det g_{\phi}}\partial_{j,\phi}=\frac{\partial_{i,\phi}\sqrt{\det g_{\phi}}}{\sqrt{\det g_{\phi}}}g_{\phi}^{ij}\partial_{j,\phi}+\partial_{i,\phi}g_{\phi}^{ij}\partial_{j,\phi}\ .
\end{equation}
We then define
\begin{equation}
    X_i=(1-G_{\phi}\triangle_{\phi})x_i
\end{equation}
where $G_{\phi}$ is the Green operator for $\triangle_{\phi}$ with vanishing boundary conditions at infinity.

The diffeomorphism $\alpha_{\Gamma}$ then assumes the following form.
\begin{proposition}
Let $\Gamma$ be a compactly supported variation of the background\\ (\ie $\mathrm{supp}(\Gamma-\Gamma_0)$ is compact). Then 
\be
\alpha_{\Gamma}(\tau,x)=(\tau+\delta\tau,x+\delta x)
\ee
with
\be
\delta\tau=\phi(\bullet,x+\delta x)^{-1}\circ\phi_0(\tau)-\tau
\ee
and
\be
\delta x_i=\frac{a^2}{4\pi }\int d^3y \frac{\triangle_{\phi}y^i}{|x-y|}\ .
\ee
\end{proposition}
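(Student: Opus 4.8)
The plan is to use the special feature of this background that the relational coordinates $X_j[\Gamma_0]$ are just the Cartesian coordinates, so that the defining relation $X[\Gamma]\circ\alpha_\Gamma=X[\Gamma_0]$ collapses to $\alpha_\Gamma=X[\Gamma]^{-1}$, and then to invert $X[\Gamma]$ component by component.

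First I would check that $X[\Gamma_0]=\mathrm{id}$. On the FLRW background the level sets $\phi_0=\mathrm{const}$ are the slices $\tau=\mathrm{const}$, the tangent frame \eqref{tangent} reduces to $\partial_{i,\phi_0}=\partial_i$ since $\partial_i\phi_0=0$, and the induced metric is $g_{\phi_0,ij}=a^2\delta_{ij}$; feeding this into \eqref{eq:lap} gives $\triangle_{\phi_0}=a^{-2}\sum_i\partial_i^2$ on each slice ($a$ being constant there), so $\triangle_{\phi_0}x_i=0$ and hence $X_i[\Gamma_0]=(1-G_{\phi_0}\triangle_{\phi_0})x_i=x_i$, while $X_0[\Gamma_0]=\phi_0^{-1}\circ\phi_0=\mathrm{id}$. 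The proposition is therefore equivalent to solving $X[\Gamma](q)=(\tau,x)$ for $q=(\tau+\delta\tau,x+\delta x)$.

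The time component $X_0[\Gamma](q)=\tau$ reads $\phi_0^{-1}(\phi(q))=\tau$, i.e. $\phi(\tau+\delta\tau,x+\delta x)=\phi_0(\tau)$; inverting $s\mapsto\phi(s,x+\delta x)$ — which is legitimate for $\Gamma$ close to $\Gamma_0$, because $\phi_0'<0$ and $\Gamma-\Gamma_0$ has compact support so $\partial_\tau\phi$ cannot vanish — produces exactly the stated implicit formula for $\delta\tau$. For the spatial components, $X_i[\Gamma](q)=x_i$ combined with $x_i(q)=x_i+\delta x_i$ gives $\delta x_i=(G_\phi\triangle_\phi x_i)(q)$, where $\triangle_\phi$ and $G_\phi$ live on the hypersurface $\Sigma_q=\{\phi=\phi_0(\tau)\}$ through $q$. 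Since $\triangle_{\phi_0}x_i=0$, the quantity $\triangle_\phi x_i$ is of first order in $\Gamma-\Gamma_0$ and compactly supported; hence, to the order at which $\alpha_\Gamma$ is used, one may replace $G_\phi$ by its background value $G_{\phi_0}$ and the base point $q$ by $p=(\tau,x)$, the errors being quadratic in $\Gamma-\Gamma_0$. As $G_{\phi_0}$ equals $a^2$ times the Green operator on $\RR^3$ of the flat Laplacian $\sum_i\partial_i^2$, whose kernel is proportional to $|x-y|^{-1}$, one obtains the kernel $\tfrac{a^2}{4\pi|x-y|}$ — the overall sign being fixed by the normalisation making $X_i=(1-G_\phi\triangle_\phi)x_i$ harmonic — and hence the claimed expression for $\delta x_i$.

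The computational content is light; the real care goes into the well-posedness of the two implicit equations. One needs the standing hypothesis of Section \ref{sec:inv}, that $X[\Gamma]$ is a global diffeomorphism $M\to\RR^4$ for $\Gamma$ near $\Gamma_0$, made concrete here: global invertibility of $s\mapsto\phi(s,\cdot)$, and existence and spatial decay of $G_\phi$ applied to the compactly supported density $\triangle_\phi x_i$, so that $\delta x\to0$ at spatial infinity and the coupled system for $(\delta\tau,\delta x)$ has a unique small solution. Granting this, the proposition follows by reading off the two components of $\alpha_\Gamma=X[\Gamma]^{-1}$ and expanding the Green operator and the evaluation point about the background.
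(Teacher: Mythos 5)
Your overall strategy is the same as the paper's: observe that $X[\Gamma_0]=\mathrm{id}$ (your explicit verification that $\triangle_{\phi_0}$ annihilates the coordinate functions is precisely the fact the paper invokes), so that $\alpha_\Gamma=X[\Gamma]^{-1}$, and obtain $\delta\tau$ by inverting $s\mapsto\phi(s,x+\delta x)$, which is exactly the paper's ``simple consequence of the definition of $T$''. The gap is in the spatial component. You arrive at $\delta x_i=(G_\phi\triangle_\phi x_i)(q)$ and then replace $G_\phi$ by $G_{\phi_0}$ ``up to errors quadratic in $\Gamma-\Gamma_0$''. But the proposition carries no first-order qualifier, and the whole point of the particular form of the stated kernel --- the \emph{background} Coulomb kernel $\tfrac{a^2}{4\pi|x-y|}$ integrated against the \emph{full perturbed} $\triangle_\phi y^i$ --- is that this replacement is not an approximation. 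As written, your argument proves a strictly weaker, first-order version of the claim.

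The missing ingredient is the resolvent equation
\[
G_{\phi}-G_{\phi_0}=G_{\phi}(\triangle_{\phi_0}-\triangle_{\phi})G_{\phi_0}\,,
\]
valid on compactly supported smooth functions (which is where the hypothesis that $\Gamma-\Gamma_0$ has compact support enters). It yields the exact identity
\[
\bigl(1-G_{\phi}(\triangle_{\phi}-\triangle_{\phi_0})\bigr)\bigl(1+G_{\phi_0}(\triangle_{\phi}-\triangle_{\phi_0})\bigr)f=f\,,
\]
and since $\triangle_{\phi_0}x_i=0$, the operator $1-G_\phi\triangle_\phi$ defining $X_i$ coincides on the coordinate functions with $1-G_{\phi}(\triangle_{\phi}-\triangle_{\phi_0})$; its inverse applied to $x_i$ is therefore $x_i+G_{\phi_0}\triangle_\phi x_i$ \emph{exactly}, and the explicit Coulomb kernel of $G_{\phi_0}=a^2(\sum_i\partial_i^2)^{-1}$ gives the stated $\delta x_i$ with no quadratic remainder. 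Your well-posedness remarks (global invertibility of $\phi$ in its first argument, decay of $G_\phi$ at infinity) are legitimate and not spelled out in the paper either, but the resolvent identity is the one substantive step of the proof, and it is the step your proposal replaces by a perturbative estimate.
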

\begin{proof}
On smooth functions $f\in\mathcal{C}^\infty(\RR^3)$ we have 
\be
(1-G_{\phi}(\triangle_{\phi}-\triangle_{\phi_0}))(1+G_{\phi_0}(\triangle_{\phi}-\triangle_{\phi_0}))f=f
\ee
where we use the resolvent equation
\be
G_{\phi}-G_{\phi_0}=G_{\phi}(\triangle_{\phi_0}-\triangle_{\phi})G_{\phi_0}\ .
\ee
which holds on compactly supported smooth functions.

The claim for $\delta x_i$ now follows from the fact that $\triangle_{\phi_0}$ annihilates the coordinate functions and from the explicit form of $G_{\phi_0}$. The claim for $\delta\tau$ is a simple consequence of the definition of $T$.
\end{proof}
The proposition is expected to hold also for perturbations $\Gamma-\Gamma_0$ which are not compactly supported but vanish sufficiently fast at infinity. However,
we refrain from entering this issue in the present review.

Unfortunately, the chosen spatial coordinates are non-local.
As a consequence, the  associated invariant fields  are invariant only under diffeomorphisms which tend sufficiently fast to the identity at infinity. This creates problems in higher order perturbation theory where renormalisation ambiguities for non-local functionals are not under control, in general.
Up to first order perturbation theory, however, this problem does not appear.

In zeroth order, we have a free theory which coincides with the traditional cosmological perturbation theory \cite{DS20}, where the linearised classical equations hold also
in the quantum theory. Quantum effects arise from the commutation relations and from the correlations in appropriate states.

The commutation relations are uniquely fixed by $L_{00}$. It is given by
\begin{equation}
    L_{00}(h,\varphi,c,b,\bar{c})=\frac12\langle (h,\varphi),\tilde{P}(h,\varphi)\rangle+\langle(h,\varphi),Qb\rangle-\frac12\sum_j ( b_j^2 d\mathrm{vol}-i\bar{c_j} d\ast d c^j)
\end{equation}
$\tilde{P}$ and $Q$ are differential operators depending on the background configuration $\Gamma_0$. (See \cite{H14} for details, where $P$ and $\tilde{P}$ are interchanged and $Q$ is denoted by $K$). The operator
\begin{equation}
    P=\left(\begin{array}{cc}
        \tilde{P} & Q \\
        Q^t & -1
    \end{array}\right)
\end{equation}
is Green-hyperbolic \cite{Bar}, since $\tilde{P}+QQ^t$ is normally hyperbolic. The advanced Green operator $\Delta^{\rm A}$ for $P$ is obtained from the advanced Green operator
$E_{A}$ of $\tilde{P}+QQ^t$ by 
\begin{equation}
    \Delta^{A}=\left(\begin{array}{cc}
       E_{A}  & E_AQ\\
       Q^tE_A  &Q^tE_AQ 
    \end{array}\right)\ ,
\end{equation}
and an analogous formula holds for the retarded Green operator.

The ghost term of the Lagrangian is decoupled at this order, hence it can be treated separately.

Let now $\Delta$ be the difference between the retarded and the advanced Green operator of $P$. The smeared fields
\be
A(f)=h(f_1)+\varphi(f_2)+b(f_3)
\ee
satisfy the commutation relations 
\be
[A(f),A(g)]=i\langle f,\Delta g\rangle
\ee
and the field equation
\be
A({P}^tf)=0\ .
\ee
Taking the generators $A(f)$ and the relations above, we obtain an algebra.

Another way to obtain this algebra, up to isomorphy, is to consider the algebra generated by the linear functionals of the form $A(f)$ and apply the procedure described in section~\ref{sec:BV}. This amounts to taking the 0-homology of the Koszul-Tate operator, which then corresponds to taking the quotient by expressions of the form $A({P}^tf)$.

The ghosts and antighosts satisfy canonical anticommutation relations with the scalar commutator function multiplied by $-i$ and the corresponding field equation.

The linearized BRST transformation $\gamma_0$  maps linear fields to linear fields. It is given by
\be
\gamma_0(h_{\mu\nu})=\nabla_\mu c_{\nu}+\nabla_\nu c_\mu
\ee
with the Levi-Civita connection $\nabla$ of the background metric $g_0$,
\be
\gamma_0(\varphi)=c^\mu\partial_\mu\phi_0\ ,\ \gamma_0(\bar{c}_j)=-ib_j
\ee
and vanishes on the other fields.

The fields which are at first order gauge invariant are obtained by the expansion described in Section \ref{sec:inv}, equation \eqref{eq:inv1}. In the case at hand, the vector field $Z$ is given by
\be
Z=\frac{\varphi}{\phi_0'}\partial_\tau+Z^i \partial_i
\ee
with
\begin{equation}\label{eq:Zi}
Z^i=\left\langle \frac{\delta X_i}{\delta \Gamma}[\Gamma_0],\Gamma-\Gamma_0\right\rangle=G_0\left(\partial_i\frac{\mathcal{H}}{\phi_0'}\varphi+\partial_j h_i^j-\frac12\partial_i h_j^j\right)\equiv X_i^{(1)}\,,
\end{equation}
where $G_0$ is the Green operator of the Laplacian on $\RR^3$ (i.e. the convolution with the Coulomb potential). To obtain the above formula, we used the fact that differentiating \eqref{eq:lap}, one obtains
\be
\left\langle\frac{\delta\triangle_{\phi}}{\delta\phi}[a^2\eta,\phi_0],\varphi\right\rangle=-\frac{1}{a^{2}\phi_0'}\left(2(\partial_i\varphi)\partial_i\partial_\tau+\mathcal{H}(\partial_i\varphi)\partial_i+(\triangle\varphi)\partial_{\tau}\right)\,.
\ee
and
\be
\left\langle \frac{\delta\triangle_{\phi}}{\delta g}[a^2\eta,\phi_0],h\right\rangle=\sum_{ij}a^{-4}\left(\frac12(\partial_ih_{jj})\partial_i-\partial_i h_{ij}\partial_j\right)\ .
\ee

Next, we find that 
\begin{equation}\label{eq:gammazero}
\gamma_0(Z^{\mu})=c^{\mu}\,.
\end{equation}
Now we use formula \eqref{eq:inv1}, with $A$ the metric and where $A[\Gamma_0]$ is $a^2\eta$ and obtain the gauge invariant fields
\begin{equation}
    \tilde{g}_{\mu\nu}=a^2\eta_{\mu\nu}+h_{\mu\nu}-\partial_{\mu}Z_\nu-\partial_\nu Z_\mu+2\Gamma_{\mu\nu}{}^{\lambda}Z_\lambda\ .
\end{equation}
Using \eqref{eq:gammazero} we verify that $\gamma_0(\tilde{g}_{\mu\nu})=0$.

The non-vanishing Christoffel symbols for the background metric $a^2\eta$ are
\be
\Gamma_{00}{}^0=\Gamma_{ii}{}^0=\Gamma_{i0}{}^i=\Gamma_{0i}{}^i=\mathcal{H}\ ,\ i=1,2,3 .
\ee
Defining $\tilde{h}=\tilde{g}-a^2\eta$ and using the convention
$Z_{\mu}=a^2\eta_{\mu\nu}Z^{\nu}$, we obtain
\be
\tilde{h}_{00}=h_{00}-2\partial_0Z_0+2\mathcal{H}Z_0=h_{00}+2a^2(\partial_{\tau} +\mathcal{H})\frac{\varphi}{\phi_0'}
\ee
\be
\tilde{h}_{0i}=h_{0i}-\partial_0Z_i-\partial_i Z_0+ 2\mathcal{H}Z_i=h_{0i}-a^2\partial_{\tau}X_i^{(1)}+a^2\partial_i\frac{\varphi}{\phi_0'}
\ee
\be
\tilde{h}_{ij}=h_{ij}-\partial_iZ_j-\partial_jZ_i+2\mathcal{H}\delta_{ij}Z_0=h_{ij}-a^2(\partial_iX_j^{(1)}+\partial_jX_i^{(1)}+2\mathcal{H}\delta_{ij}\frac{\varphi}{\phi_0'})\ .
\ee
All the fields other than $\tilde{h}_{00}$ are non-local due
to the occurrence of $G_0$. However, local fields can be easily obtained by applying the Laplace operator to them. Since the dilaton field was used as a coordinate, the corresponding gauge invariant field
\be
\tilde{\phi}=\phi_0+\varphi-Z^0\partial_\tau\phi_0=\phi_0
\ee
is trivial. Using the formula \eqref{eq:inv1} we also find that  $\tilde{X}_i^{(1)}=0$. Now applying \eqref{eq:Zi} we obtain  the following constraints
\be
\sum_j\partial_j\tilde{h}_{ji}=\sum_j\frac12\partial_i\tilde{h}_{jj}\,,
\ee
so the fields $\tilde{h}_{ij}$ are not independent.

It is customary to parametrise the metric perturbation $h$ in terms of scalars, vector and tensor fields with respect to the euclidean symmetry of $\RR^3$. Note however that this parametrisation is unique only if all these fields vanish at infinity. Moreover, these fields are non-local functionals of the field configuration. 

The space-space-components of $h$ in that parametrisation are given by
\be
h_{ij}=2a^2(\partial_i\partial_jE+\delta_{ij}D+\partial_{(i}W_{j)}+T_{ij})
\ee
with a tensor field $T$ with $\sum_iT_{ii}=0$ and $\sum_i\partial_iT_{ij}=0$, a vector field $W$ with $\sum_i\partial_iW_i=0$ and scalar fields $D$ and $E$ with $D=\frac12(\sum_jh_{jj}-\sum_{ij}G_{0}\partial_i\partial_jh_{ij})$ and $E=G_0(\frac{1}{2a^2}\sum_j h_{jj}-3D)$. In terms of these fields, the 1st order coordinate fields\footnote{Unfortunately, in \cite{BFHPR16}, the field $W$ was missing in the corresponding formula} are
\be
X_i^{(1)}=\partial_iE+W_i+\frac{\mathcal{H}}{\phi_0'}G_0\partial_i\mu
\ee
with the Mukhanov-Sasaki variable $\mu=\varphi-\frac{\mathcal{H}}{\phi_0'}D$.

For the gauge invariant fields $\tilde{h}_{ij}$ we find
\be
\tilde{h}_{ij}=2a^2\bigl(T_{ij}+\frac{\mathcal{H}}{\phi_0'}(G_0\partial_i\partial_j-\delta_{ij})\mu\bigr)\ .
\ee
The time-time component is $h_{00}=-2a^2 A$ with a scalar field $A$. The corresponding gauge invariant field is
\be
\tilde{h}_{00}=2a^2\bigl((\partial_\tau+\mathcal{H})\frac{\varphi}{\phi_0'}-A\bigr)\ . 
\ee
The time-space component is written as
\be
h_{0i}=a^2(V_i-\partial_i B)
\ee
with a scalar field $B$ and a divergence free vector field $V$. For the gauge invariant vector field $\tilde{h}_{0\bullet}$ we get
\be
\tilde{h}_{0i}=a^2\bigl((V-W')_i+\frac{1}{\phi_0'}\partial_i(\chi-G_0\mu')\bigr)
\ee
with the vector field $V-W'$ and the scalar field $\chi=\varphi-\phi_0'(B+E')$.

We observe that our gauge invariant fields can be parametrised by the fields $\mu$, $\chi$, $T$, $V-W'$ and $\Phi=A-(\partial_{\tau}+\mathcal{H})(B+E')$, which are the gauge invariant fields  traditionally used in cosmological perturbation theory. Moreover, these fields are uniquely determined by the fields $\tilde{h}_{\mu\nu}$. Note, however, that the mentioned parametrisation induces additional non-localities, which can give rise to spurious violations of causality \cite{Eltz}. On the contrary, the fields $\triangle\tilde{h}_{\mu\nu}$ are local and have to satisfy the usual causal relations. 

This construction of gauge invariant fields may be illustrated by relating it to geometrical objects. We use the tangent fields in \eqref{tangent} and
compute the spatial curvature tensor at first order
\be
\begin{split}
    {R^{\phi\,\, l}_{ijk}}=\frac{1}{2a^2}&(\partial_i\partial_k h_{jl}-\partial_j\partial_k h_{il}-\partial_i\partial_l h_{jk}
    +\partial_j\partial_l h_{ik})\\
    &+\frac{\mathcal{H}}{a\phi_0'}(\partial_i\partial_l\varphi\delta_{jk}-\partial_j\partial_l\varphi\delta_{ik}-\partial_i\partial_k\varphi\delta_{jl}+\partial_j\partial_k\varphi\delta_{il})\\
    =\frac{1}{2a^2}&(\partial_i\partial_k \tilde{h}_{jl}-\partial_j\partial_k \tilde{h}_{il}-\partial_i\partial_l \tilde{h}_{jk}+\partial_j\partial_l \tilde{h}_{ik}) \ .
\end{split}
\ee
It is already gauge invariant, in agreement with the Stewart-Walker theorem, since it vanishes on the background.

The extrinsic curvature, however,
\be
K_{ij}^{\phi}=-N_{\phi}\langle d\phi,\nabla_{\partial_{i,\phi}}\partial_{j,\phi}\rangle, \text{ with }N_{\phi}=|g^{-1}(d\phi,d\phi)|^{-\frac12}\ ,
\ee
 assumes on the background the value
\be
K_0^{\phi}{}_{ij}=a\mathcal{H}\delta_{ij}
\ee
and therefore its first order contribution
\be
{K_1}^{\phi}_{ij}=\frac{\mathcal{H}}{2a}h_{00}\delta_{ij}-\frac{a}{\phi_0'}\partial_i\partial_j\varphi
\ee
is not gauge invariant.

A gauge invariant 1st order contribution $\tilde{K}^{\phi}_1$ to the extrinsic curvature is obtained by evaluating it at an infinitesimally shifted conformal time $\tau'(\varphi,\tau)$, which amounts to replace $h_{00}$ by $\tilde{h}_{00}$ and $\varphi$ by 0, i.e.
\be
\tilde{K}_1^{\phi}{}_{ij}=\frac{\mathcal{H}}{2a}\tilde{h}_{00}\delta_{ij}\ .
\ee
\section{Quantum gravity and the cosmic microwave background}
The observed cosmic microwave background (CMB) can, to a large extent, be described by a state of the free electromagnetic field on an FLRW spacetime which satisfies the Kubo-Martin-Schwinger (KMS) condition \cite{HHW} with respect to conformal time. This is a quasifree state with the 2-point function
\be
\omega_{\beta}(F_{\mu\nu}(x)F_{\rho\sigma}(y))=(2\pi)^{-3}\int d^{4}p\delta(p^2)
P_{\mu\nu\rho\sigma}(p)e^{i(x-y)p}\frac{\theta(p^0)+e^{\beta p}\theta(-p^0)}{1-e^{-\beta p}}
\ee
with
\be
P_{\mu\nu\rho\sigma}=p_\mu p_\rho\eta_{\nu\sigma}-p_\nu p_\rho\eta_{\mu\sigma}+p_\nu p_\sigma\eta_{\mu\rho}-p_\mu p_\sigma\eta_{\nu\rho}\ .
\ee
$\beta$ is here a 4-vector in the interior of the future light cone in Minkowski space, and we use the notation $yp=y^\mu p_{\mu}$, $p_{\mu}=\eta_{\mu\nu}p^{\nu}$ and $p^2=-p_\mu p^{\mu}$. 
$\beta$ characterizes a rest system determined by the heat bath. Its proper time
\be
|\beta|=\sqrt{\beta^2}
\ee
is the inverse of the temperature relative to conformal time. In cosmic time $t$, related to conformal time $\tau$ by $dt=a d\tau$, this state is not an equilibrium state. But since the scale parameter $a$ varies slowly, one can interpret it approximately as an equilibrium state with time-dependent temperature $\mathfrak{T}=\frac{1}{a|\beta|}$.
Its temperature at the time of recombination is related to the binding energy of hydrogen. This then yields the nowadays observed temperature. 

The electromagnetic radiation in the relevant frequencies does interact only weakly with the mostly neutral matter after recombination. The main deviations from the simple behaviour as a conformal KMS state are due to variations of the metric. This is known as the Sachs-Wolfe effect \cite{SW67}. Some of these variations of the metric are caused by inhomogenities in the distribution of matter in the universe. But in addition one observes small fluctuations which can be interpreted as quantum fluctuations of the gravity-dilaton system described in the previous section. 

The observed electromagnetic field can be related to the field at the time $\tau_r$ of recombination by using the free Maxwell equations in a spacetime with metric $a^2\eta+h$. For this purpose we choose a smooth function $\chi$ of conformal time which is equal to 1 for $\tau<\tau_r-\epsilon$ and vanishes for $\tau>\tau_r+\epsilon$.
The free Maxwell equation for the field strength $F$ (considered as a 2-form)
is
\be
(d+\delta)F=0
\ee
with the codifferential $\delta$.
The differential operator $d+\delta$ is Green hyperbolic \cite{Bar}, \ie it has unique  retarded and advanced Green operators $G_{\mathrm{ret}}$ and $G_{\mathrm{adv}}$. Let now $f$ be a compactly supported 2-form with support contained in a neighborhood of our own spacetime position with $\tau>\tau_r+\epsilon$ for $(\tau,x)\in\supp f$.  
We have
\begin{equation}
    f=(d+\delta)\chi G_{\mathrm{adv}}f+(d+\delta)(1-\chi) G_{\mathrm{adv}}f\ .
\end{equation}
The first term on the right hand side has support in the time slice $\tau_r-\epsilon<\tau<\tau_r+\epsilon$.
Since $G_{\mathrm{adv}}f$ has support in the past light cone of $\supp f$, $(1-\chi) G_{\mathrm{adv}}f$ has compact support. Due to the field equation,
$\int F\wedge(d+\delta)(1-\chi)G_{\mathrm{adv}}f=0$, hence the second term does not contribute to the smeared field strength $\int F\wedge f$, and we obtain
the identity
\be
\int F\wedge f=\int F\wedge (d+\delta)\chi G_{\mathrm{adv}}f
\ee
which relates the present electromagnetic field with the field at the recombination time. 
The Maxwell equations are conformally invariant, therefore we may use the codifferential of the conformally transformed metric $\eta+a^{-2}h$.
For the unperturbed spacetime we then can use the codifferential $\delta_0$ and the Green operator $G_{\mathrm{adv}}^0$ of Minkowski space and get in first order 
for the perturbed metric  
$\delta=\delta_0+\delta_1$ and 
\be
G_{\mathrm{adv}}=G_{\mathrm{adv}}^0-G_{\mathrm{adv}}^0\delta_1G_{\mathrm{adv}}^0\ .
\ee
The measurement of electromagnetic observables related to the radiation from a certain direction then provides information on 
the metric. 
The observed correlations between variations of the metric in different directions can now be related to correlation functions  of the associated observables in the gravity-dilaton system.

A rough estimate of the effect can be obtained from the lapse function (cf. \eqref{eq:dilatontime})
\be
N=|g^{-1}(dT,dT)|^{-\frac12}\ .
\ee
Up to first order it is given by
\be
N=a(1-\frac{\varphi'}{\phi_0'}-\frac{h_{00}}{2a^2})\ .
\ee
A gauge invariant version is
\be
\tilde{N}=a(1-\frac{\tilde{h}_{00}}{2a^2})\ .
\ee
The 2-point function of $\tilde{h}_{00}$ in an appropriate state of the linearized gravity-dilaton system is then related to the observed variations of the temperature of the CMB 
\[\frac{\delta\mathfrak{T}}{\mathfrak{T}}\approx \frac{\tilde{h}_{00}}{2a^2}\ .\]


The partial explanation of the observed temperature fluctuations is a strong support for the existence of the dilaton field and represents the up to now only observed effect of the quantization of gravity.
\section{Concluding remarks}
We have seen that quantum gravity, in spite of its non-renormalisability, gives rise to a well defined perturbation series which can be understood as an effective field theory. The problem of the absence of local observables can be treated in terms of relative observables.
On generic backgrounds, these relative observables are local functionals of dynamical fields which are used as coordinates. On backgrounds with a high symmetry one has to rely on non-local expressions which will create problems in higher order perturbation theory. 

We here described a choice of coordinates which are appropriate for FLRW spacetimes. On Minkowski space as a background, one could instead use solutions of the wave equation
\be
X^{\mu}=(1-G_{\mathrm{ret}}\Square_g)x^{\mu}\ .
\ee
with the canonical coordinates of $\RR^{1,3}$.
This was used in \cite{FRV22} for a computation of the quantum correction to the Newton formula for the gravitational attraction. This choice, which is related to the harmonic gauge, was also used in several papers of Markus Fr\"ob et al. for more general backgrounds \cite{Frob}. It avoids some of the pathologies caused by the spatial non-locality of the spatial coordinates introduced in \cite{BFHPR16}.  

In reality, we have all the fields of the Standard Model which could serve as coordinates, provided we expand around a generic background. So the use of relative observables which are local functionals of local fields is always possible and will yield a well defined perturbation expansion. If we instead use a highly symmetric background for our convenience we have to pay for this with the treatment of non-local quantities.


\begin{thebibliography}{99}
\bibitem{Bar}
C.~B\"ar, ``Green-Hyperbolic Operators on Globally Hyperbolic Spacetimes,''
Commun. Math. Phys. \textbf{333} (2015) 1585-1615.
\bibitem{borchers}
H.~-J.~Borchers, ``\"Uber die Mannigfaltigkeit der interpolierenden Felder zu einer kausalen S-Matrix,'' Il Nuovo Cimento \textbf{15} (1960) 784-794.
\bibitem{BreDue}
F.~Brennecke and M.~D\"utsch, ``Removal of violations of the Master Ward Identity in perturbative QFT,'' Rev. Math. Phys. \textbf{20} (2008), 119-172.
\bibitem{BF00}
R.~Brunetti and K.~Fredenhagen, ``Microlocal Analysis and
Interacting Quantum Field Theories: Renormalization on Physical Backgrounds,'' Commun. Math. Phys. \textbf{208} (2000) 623-661.
\bibitem{BFV03}
R.~Brunetti, K.~Fredenhagen and R.~Verch, 
``The Generally Covariant Locality Principle - A New Paradigm for Local Quantum Field Theory,'' 
Commun. Math. Phys.,\textbf{237} (2003) 31-68.
\bibitem{BDF}
R.~Brunetti, M.~D\"utsch and K. Fredenhagen, ``Perturbative Algebraic Quantum Field Theory and the Renormalization Groups,''
Adv. Theor. Math. Phys.  \textbf{13} (2009) 1541-1599.
\bibitem{BFR16}
R.~Brunetti, K.~Fredenhagen and K.~Rejzner,
``Quantum gravity from the point of view of locally covariant quantum field theory,''
Commun. Math. Phys. \textbf{345} (2016) no.3, 741-779
\bibitem{BFHPR16}R.~Brunetti, K.~Fredenhagen, T.~P.~Hack, N.~Pinamonti and K.~Rejzner,
``Cosmological perturbation theory and quantum gravity,''
J. High Energ. Phys. \textbf{2016}, 32 (2016).
\bibitem{BFR}
R.~Brunetti, K.~Fredenhagen and P.~L.~Ribeiro, ``Algebraic Structure of Classical Field Theory: Kinematics and Linearized Dynamics for Real Scalar Fields,''
Commun. Math. Phys. \textbf{368} (2019) 519-584.
\bibitem{DS20}
S.~Dodelson and F.~Schmidt, \textit{Modern Cosmology}, 2nd ed., Elsevier, 2020.
\bibitem{Don22}
J.~F.~Donoghue, ``Quantum general relativity and effective field theory,'' 	arXiv:2211.09902 [hep-th].
\bibitem{DueBook}
M.~D\"utsch, \textit{From classical field theory to perturbative quantum field theory}, Prog. Math. Phys., vol. 74. Birkh\"auser (2019).
\bibitem{Eltz}
B.~Eltzner, ``Quantization of Perturbations in Inflation,'' arXiv:1302.5358.
\bibitem{epstein}
H.~Epstein, ``On The Borchers Class of a Free Field,'' Il Nuovo Cimento \textbf{27} (1963) 886-893. 
\bibitem{EG}
H.~Epstein and Y.~Glaser, ``The role of locality in perturbation theory,'' Ann. Inst. Henri Poincar\'e-Section
A, vol. XIX, n.3, 211 (1973).
\bibitem{FH90}
K.~Fredenhagen and R.~Haag, ``On the derivation of Hawking radiation associated with the formation of a black hole,''
Commun. Math. Phys. \textbf{127} (1990) 273-284.
\bibitem{FR12}
K.~Fredenhagen and K.~Rejzner, ``Batalin-Vilkovisky Formalism in the Functional Approach to Classical Field Theory,''
 Commun. Math. Phys. \textbf{314} (2012) 93-127.
\bibitem{FR13}
K.~Fredenhagen and K.~Rejzner,
``Batalin-Vilkovisky formalism in perturbative algebraic quantum field theory,''
Commun. Math. Phys. \textbf{317} (2013), 697-725
\bibitem{Frob}
 M.~B.~Fr\"ob, ``Gauge-invariant quantum gravitational corrections to correlation functions,'' Class. Quant. Grav. \textbf{35} (2018) 055006.
\bibitem{FRV22}
M.~B.~Fr\"ob, C.~Rein and R.~Verch, ``Graviton correction to the Newtonian potential using invariant observables,'' J. High Energ. Phys. \textbf{2022}, 180 (2022).
\bibitem{goldberg}
D.~Goldberg, \textit{The Standard model in a nutshell}, Princeton University Press, 2017.
\bibitem{GS86}
M.~H.~Goroff and A.~Sagnotti, ``The Ultraviolet Behavior of Einstein Gravity,''   Nuclear Physics B \textbf{226} (1986) 709-736.  
\bibitem{HHW} 
R.~Haag, N.~M.~Hugenholtz and M.~Winnink, ``On the equilibrium states in quantum statistical mechanics,'' Commun. Math. Phys. \textbf{5} (3) (1967) 215-236.
\bibitem{H14}
T.~P.~Hack,
``Quantization of the linearized Einstein-Klein-Gordon system on arbitrary backgrounds and the special case of perturbations in inflation,''
Class. Quant. Grav. \textbf{31} (2014) no.21, 215004
\bibitem{hawking}
S.~W.~Hawking, ``Particle creation by black holes,'' Commun. in Math. Phys. \textbf{43} (1975) 199-220.
\bibitem{Haw05}
S.~W.~Hawking, ``Information loss in black holes,'' Phys. Rev. D \textbf{72}, 084013 (2005).
\bibitem{HW01}
S.~Hollands and R.~Wald, ``Existence of Local Covariant Time Ordered Products of Quantum Fields in Curved Spacetime,'' 
Commun. Math. Phys., \textbf{231} (2001) 309-345.
\bibitem{tHV}
G.~Õt~Hooft and M.~J.~G.~Veltman, ``One loop divergencies in the theory of
gravitation,'' Ann. Inst. H. Poincare Phys. Theor. A \textbf{20} (1974) 69-94.
\bibitem{parker}
 L.~Parker,``Particle creation in expanding universes.'' Physical Review Letters, \textbf{21} (1968) 562.
\bibitem{PS}
G.~Popineau and R.~Stora, ``A pedagogical remark on the main theorem of perturbative renormalization theory,'' Nuclear Physics B \textbf{912} (2016) 70-78.
\bibitem{RejBook}
K.~Rejzner, \textit{Perturbative Algebraic Quantum Field Theory: An Introduction For Mathematicians}, Mathematical Physics Studies, Springer Cham (2016).
\bibitem{SW67} 
R.~K.~Sachs and A.~M.~Wolfe, ``Perturbations of a Cosmological Model and Angular Variations of the Microwave Background,'' Astrophysical Journal, \textbf{147} (1967) 73.
\bibitem{SW74}
J.M. Stewart and M. Walker,``Perturbations of spacetimes in general relativity,'' Proc. Roy. Soc. Lond. A \textbf{341} (1974) 49.
\bibitem{SW00}
R.~F.~Streater and A.~S.~Wightman, \textit{PCT, Spin and Statistics, and All That}, Princeton Landmarks in Physics, 2000.
\bibitem{SP}
E.~C.~G.~St\"uckelberg and A.~Petermann, ``La normalisation des constantes dans la th\'eorie des quanta,'' Helv. Phys. Acta \textbf{26} (1953) 499-520.
\bibitem{Tamb12}
J.~Tambornino, ``Relational Observables in Gravity: a Review,'' SIGMA \textbf{8} (2012) 017.
\bibitem{TehraniZahn}
M.~T.~Tehrani and J.~Zahn, ``Background independence in gauge theories,'' Ann. Henri Poincar\'e \textbf{21} (2020) 1135-1190.
\end{thebibliography}
\end{document}